\documentclass[a4paper,UKenglish,cleveref, autoref, thm-restate]{lipics-v2021}

\pdfoutput=1
\hideLIPIcs

\bibliographystyle{plainurl}

\title{On Top-Down Pseudo-Boolean Model Counting}

\author{Suwei Yang}{Grabtaxi Holdings \and National University of Singapore}{}{0000-0003-1082-724X}{}

\author{Yong Lai}{Key Laboratory of Symbolic Computation and Knowledge Engineering of Ministry of Education, Jilin University, China}{}{0000-0002-6882-0107}{}

\author{Kuldeep S. Meel}{Georgia Institute of Technology \and University of Toronto}{}{0000-0001-9423-5270}{}

\authorrunning{S. Yang, Y. Lai and K.S. Meel}

\ccsdesc[500]{Theory of computation~Constraint and logic programming}

\keywords{Pseudo-Boolean, Model Counting, Constraint Satisfiability}

\category{}

\relatedversion{}

\funding{This work was supported in part by the Grab-NUS AI Lab, a joint collaboration between GrabTaxi Holdings Pte. Ltd. and National University of Singapore, and the Industrial Postgraduate Program (Grant: S18-1198-IPP-II), funded by the Economic Development Board of Singapore. This work was supported in part by the Natural Sciences and Engineering Research Council of Canada (NSERC) [RGPIN-2024-05956], Jilin Provincial Natural Science Foundation [20240101378JC], and Jilin Provincial Education Department Research Project [JJKH20241286KJ].}

\acknowledgements{
The authors thank the reviewers for their feedback. The computational work for this article was performed on resources of the National Supercomputing Centre, Singapore.
}

\nolinenumbers %

\usepackage{amsmath}
\usepackage{amsfonts}
\usepackage{todonotes}
\usepackage{booktabs}
\usepackage{subcaption}
\usepackage{amsthm}
\usepackage{adjustbox}
\usepackage{multirow}
\usepackage{nicematrix}
\usepackage{scalefnt}
\usepackage{paralist}

\usepackage{tikz}
\usetikzlibrary{positioning, arrows, arrows.meta, trees, shapes.geometric}

\usepackage[ruled,linesnumbered,noend,noline]{algorithm2e}
\DontPrintSemicolon
\SetKwComment{Comment}{$\triangleright$\ }{}
\SetKwInOut{Input}{Input}
\SetKwInOut{Output}{Output}
\SetKwInput{KwCondONE}{Cond1}

\newcommand{\sharpsattd}{\ensuremath{\mathsf{SharpSAT}}-\ensuremath{\mathsf{TD}}}
\newcommand{\ganak}{\ensuremath{\mathsf{Ganak}}}
\newcommand{\approxmcpb}{\ensuremath{\mathsf{ApproxMCPB}}}
\newcommand{\approxasp}{\ensuremath{\mathsf{ApproxASP}}}
\newcommand{\sharpasp}{\ensuremath{\mathsf{sharpASP}}}

\newcommand{\pbcount}{\ensuremath{\mathsf{PBCount}}}
\newcommand{\pbmc}{\ensuremath{\mathsf{PBMC}}}
\newcommand{\heuname}{\ensuremath{\mathsf{VCIS}}}
\newcommand{\countalgname}{\ensuremath{\mathsf{CountPBMC}}}
\newcommand{\countrecalgname}{\ensuremath{\mathsf{Count}}}
\newcommand{\pbmcncs}{\ensuremath{\mathsf{PBMC}}-\ensuremath{\mathsf{NCS}}}
\newcommand{\pbcounter}{\ensuremath{\mathsf{PBCounter}}}

\newcommand{\dfour}{\ensuremath{\mathsf{D4}}}
\newcommand{\roundingsat}{\ensuremath{\mathsf{RoundingSat}}}
\newcommand{\gpmc}{\ensuremath{\mathsf{GPMC}}}
\newcommand{\addmc}{\ensuremath{\mathsf{ADDMC}}}
\newcommand{\dpmc}{\ensuremath{\mathsf{DPMC}}}

\newcommand{\apply}{\ensuremath{\mathsf{Apply}}}

\begin{document}

\maketitle

\begin{abstract}
Pseudo-Boolean model counting involves computing the number of satisfying assignments of a given pseudo-Boolean (PB) formula. In recent years, PB model counting has seen increased interest partly owing to the succinctness of PB formulas over typical propositional Boolean formulas in conjunctive normal form (CNF) at describing problem constraints. In particular, the research community has developed tools to tackle exact PB model counting. These recently developed counters follow one of the two existing major designs for model counters, namely the bottom-up model counter design. A natural question would be whether the other major design, the top-down model counter paradigm, would be effective at PB model counting, especially when the top-down design offered superior performance in CNF model counting literature.

In this work, we investigate the aforementioned top-down design for PB model counting and introduce the first exact top-down PB model counter, {\pbmc}. {\pbmc} is a top-down search-based counter for PB formulas, with a new variable decision heuristic that considers variable coefficients. Through our evaluations, we highlight the superior performance of {\pbmc} at PB model counting compared to the existing state-of-the-art counters {\pbcount}, {\pbcounter}, and {\ganak}. In particular, {\pbmc} could count for 1849 instances while the next-best competing method, {\pbcount}, could only count for 1773 instances, demonstrating the potential of a top-down PB counter design.
\end{abstract}

\section{Introduction}
\label{sec:intro}

Model counting refers to the task of computing the number of satisfying assignments of a given logical formula, typically in the form of a propositional Boolean formula in conjunctive normal form (CNF). Model counting has witnessed sustained attention from the research community in the past decades, owing to its relevance as a technique for tackling problems in various domains such as circuit vulnerability analysis, network reliability, and probabilistic inference~\cite{FSSC12,DMPV17,SBK05}. While model counting techniques are applied across various domains, a potential barrier to their wider adoption is the requirement for users to express the domain problem as a CNF formula before passing it to model counters. As a result, there has been recent interest in the research community in developing model counting tools for alternative input logic formats, such as {\pbcount}~\cite{YM24,YM25}, {\pbcounter}~\cite{LXY24}, and {\approxmcpb}~\cite{YM21} for pseudo-Boolean formulas and {\sharpasp}~\cite{KCM24} and {\approxasp}~\cite{KESH22} for answer set programming. The motivation for exploring alternative input formats is in part due to the fact that these inputs are more expressive and might be natural for certain application domains.

One notable emerging alternative input format for the model counting task is pseudo-Boolean (PB) formulas. PB formulas and related tools have seen an increase in attention from the research community in recent years, partly owing to the succinctness of PB formulas over typical CNF formulas~\cite{LMMW18}. There are typically two major paradigms when designing model counters -- bottom-up and top-down designs. Whilst both designs have been used in exact CNF model counters, the existing exact PB model counters, namely {\pbcount} and {\pbcounter}, are of the bottom-up design. In particular, both counters make use of algebraic decision diagrams (ADDs)~\cite{BFGHMPF93} as building blocks to perform counting in a bottom-up manner, starting with ADDs which represent individual constraints of the input PB formula and combining the individual ADDs in the computation process. The bottom-up design enables {\pbcount} and {\pbcounter} to reuse some of the decision diagram techniques found in existing ADD-based CNF model counters, namely {\addmc} and {\dpmc}~\cite{DPV20a,DPV20b}. A natural question one would wonder is \textit{`How would a top-down counter design perform in comparison to existing counters for PB model counting?'}.

In this work, we address the aforementioned question via a prototype top-down exact PB model counter, which we term {\pbmc}. {\pbmc} uses a top-down search-based approach along with techniques similar to conflict-driven clause learning (CDCL), adapted for PB formulas, to perform model counting. In particular, {\pbmc} builds on top of methodologies adapted from {\roundingsat} PB solver~\cite{EN18}, {\gpmc} CNF model counter~\cite{SHS17}, and {\pbcount} PB counter~\cite{YM24}. In addition, we developed a new variable decision ordering heuristic, {\heuname}, and a new component caching scheme tailored for PB formulas. In our evaluations, {\pbmc} demonstrates superior performance to state-of-the-art PB counters {\pbcount} and {\pbcounter}, and CNF counter {\ganak}~\cite{YM24,LXY24,SRSM19}. More specifically {\pbmc} is able to return counts for 1849 instances while {\pbcount} could only return for 1773 instances, {\pbcounter} for 1508 instances, and {\ganak} for 1164 instances. In addition, we show that our {\heuname} variable decision heuristic brings performance benefits, enabling {\pbmc} to improve from 1772 instances using prior heuristics to 1849 instances with {\heuname}.

\section{Background and Preliminaries}
\label{sec:background}

\subsection{Pseudo-Boolean Formula}
A pseudo-Boolean (PB) formula $F$ consists of a set $\Omega$ of one or more pseudo-Boolean constraints. Each PB constraint $\omega \in \Omega$ is of the form $\sum_{i=1}^{n} a_i x_i \triangleright k$, where $x_1, ..., x_n$ are Boolean literals, $a_1, ..., a_n$, and $k$ are integers, and $\triangleright$ is one of $\{\geq, =, \leq\}$. $a_1, ..., a_n$ are referred to as term coefficients in the PB constraint, where each term is of the form $a_i x_i$ and $k$ is known as the degree of a PB constraint. An assignment $\sigma$ is a satisfying assignment if it assigns values to all variables of $F$ such that all the PB constraints in $\Omega$ evaluate to \textit{true}.
PB model counting refers to the task of determining the number of satisfying assignments of $F$. 

Without loss of generality, the PB constraints in this work are all of the `$\geq$' type unless otherwise stated as `$\leq$' type constraints can be converted by multiplying -1 on both sides, and `$=$' type constraints can be replaced with a pair of `$\geq$' and `$\leq$' constraints. In addition, all coefficients are positive unless stated otherwise as the signs of coefficients can be changed using $\bar{x} = 1 - x$ and updating the degree accordingly. We use the term \textit{gap} $s$ to denote the satisfiability gap of a PB constraint $\omega$ under assignment $\sigma$. A gap value of 0 or less indicates $\omega$ is always satisfied under $\sigma$, otherwise $\omega$ is yet to be satisfied.

\begin{definition}
    Let $\omega$ be a PB constraint $\sum a_i \ell_i \geq k$ and $\sigma$ be a partial assignment. The gap $s$ of $\omega$ under $\sigma$ is given by $k - (\sum_{j : \sigma(\ell_j) = 1} a_j)$, where ${j : \sigma(\ell_j) = 1}$ indicates all literals $\ell$ that evaluates to true under $\sigma$.
\end{definition}

\subsection{Bottom-Up Model Counting}

Model counters that are of the bottom-up design typically make use of decision diagrams, which are directed acyclic graph (DAG) representations of functions, in the computation process. In particular, the counters make use of the {\apply} operation~\cite{B86} to build up the representation of the input logical formula in a bottom-up manner. Notable bottom-up counters including {\addmc}, {\dpmc}, {\pbcounter}, and {\pbcount}~\cite{DPV20a,DPV20b,DPM21,LXY24,YM24,YM25} employ algebraic decision diagrams (ADDs)~\cite{BFGHMPF93} to perform model counting tasks. These counters use early projection of variables to reduce the size of the intermediate decision diagrams in practice, as the complexity of the {\apply} operation is on the order of the product of sizes of the two operand diagrams. The aforementioned counters first construct individual ADDs of the input formula components i.e. clauses in the case of CNF and constraints in the case of PB formula. The individual ADDs are then merged via the {\apply} operation with the `$\times$' operator to produce intermediate ADDs. Variables are projected out using the {\apply} operation with the `$+$' operator according to either an initially computed plan or via heuristics. Once all ADDs are merged and all variables are projected out, the final ADD has a single leaf node that indicates the model count.

\subsection{Top-Down Model Counting}
Existing top-down model counters, typically for CNF formulas, adopt similar search methodologies in solvers, which tend to be the Conflict-Driven Clause Learning (CDCL) algorithm, to compute the model count. Notable counters with the top-down component caching design include {\dfour}, {\gpmc}, {\ganak}, and {\sharpsattd}~\cite{LM17,SHS17,SRSM19,KJ21} which demonstrated superior performance as winners of recent model counting competitions. On a high level, top-down counters iteratively assign values to variables until all variables are assigned a value or a conflict is encountered. In the process, sub-components are created by either branching on a variable or splitting a component into smaller variable-disjoint components. The counters cache the components and their respective counts to avoid duplicate computation. When all variables are assigned, the counter reaches a satisfying assignment and will assign the current component the count 1 and backtrack to account for other branches. When the counter encounters a conflict i.e. the existing assigned variable already falsifies the formula, it learns the reason for the conflict and prunes the search space such that this sub-branch of the search will be avoided in the remaining computation process. A leaf component in the implicit search tree either has a count of 0 for conflicts and a count of 1 when all variables are assigned. The count of a component that has variable-disjoint sub-components is the product of the counts of its sub-components. The count of a component that branches on a variable is the sum of the counts from the two branches. After accounting for all search branches, the counter returns the final count of satisfying assignments. 

\section{Approach}
\label{sec:approach}

In this section, we detail the general algorithm of our top-down PB model counter {\pbmc} as well as design decisions that enable a performant PB counter. 

\subsection{Search-Based Algorithm}

\begin{algorithm}
    \caption{{\countalgname} -- Counting Algorithm of {\pbmc}}
    \label{alg:pbmc-main}

    \Input{PB formula $F$}
    \Output{Model count}

    $F \gets \mathsf{Preprocess}(F)$ \;
    component $\varphi \gets F$; cache $\zeta \gets \emptyset$; assignment $\sigma \gets \emptyset$ \;

    $\mathsf{Count}(\varphi, \sigma, \zeta)$\;

    \Return $\zeta$[$\varphi$]\;

\end{algorithm}

The overall counting approach of {\pbmc} is shown in Algorithms~\ref{alg:pbmc-main} and~\ref{alg:pbmc-recur-helper}. {\countalgname} starts with preprocessing the input PB formula $F$ with techniques adapted from existing state-of-the-art counters~\cite{YM24}. Subsequently, {\countrecalgname} sets the preprocessed $F$ as the root component, and initializes an empty cache $\zeta$ and an empty assignment $\sigma$. {\countalgname} calls {\countrecalgname} to search the space of assignments and compute the model count. The main computation process of {\pbmc} in {\countrecalgname} comprises of the following steps -- propagation (line~\ref{line:pbmc-recur-helper:propagate}), conflict handling (lines~\ref{line:pbmc-recur-helper:propagate:conflict-handling-start}--\ref{line:pbmc-recur-helper:propagate:conflict-handling-end}), variable decision (lines~\ref{line:pbmc-recur-helper:var-dec-start}--\ref{line:pbmc-recur-helper:var-dec-end}), and component decomposition (line~\ref{line:pbmc-recur-helper:comp-decomp}). As shown in {\countrecalgname}, {\pbmc} starts with propagation so that variable decisions can be inferred and applied. If the propagation leads to a conflict, {\pbmc} would perform conflict analysis and backjumping using similar techniques as {\roundingsat}~\cite{EN18} (lines~\ref{line:pbmc-recur-helper:analysis}--\ref{line:pbmc-recur-helper:backjump}). If the conflict is a top-level conflict, that is it cannot be resolved by backjumping, {\pbmc} will terminate and return 0 as the formula is unsatisfiable (line~\ref{line:pbmc-recur-helper:propagate:conflict-handling-end}). Otherwise, {\pbmc} would learn a PB constraint that prunes the search space, backjump to the appropriate decision level $d$, and clear the components and recursive calls created after $d$. 

If there are no conflicts, {\pbmc} will attempt to split the current component into variable-disjoint child components, and the count of the current component would be the product of child component counts (lines~\ref{line:pbmc-recur-helper:comp-decomp}--\ref{line:pbmc-recur-helper:comp-count-prod}). There are two cases of component splitting -- (i) there are no new disjoint components and (ii) there are new disjoint components. In case (i) {\pbmc} will proceed to branch on another variable, and the count of the current component would be the sum of counts from the two branches (lines~\ref{line:pbmc-recur-helper:pick-lit}--\ref{line:pbmc-recur-helper:var-dec-end}). If there are no unassigned variables remaining, the count of the current component would be 1 (line~\ref{line:pbmc-recur-helper:sat-count}). Finally, after completing the implicit search tree exploration {\pbmc} returns the count of $F$ in cache $\zeta$. 

\begin{algorithm}
    \caption{{\countrecalgname} -- helper function of {\countalgname}}
    \label{alg:pbmc-recur-helper}

    \Input{component $\varphi$, assignment $\sigma$, cache $\zeta$}

    \lIf{\upshape $\varphi$ entry in $\zeta$}{\Return $\zeta[\varphi]$}
    isConf $\gets \mathsf{propagate}(\sigma)$\; \label{line:pbmc-recur-helper:propagate}
    \eIf{\upshape isConf} { \label{line:pbmc-recur-helper:propagate:conflict-handling-start}
        $\zeta[\varphi] \gets 0$\;
        success $\gets \mathsf{ConflictAnalysis(\sigma)}$ \; \label{line:pbmc-recur-helper:analysis}
        \lIf{\upshape success}{
            $\mathsf{Backjump}()$           \Comment*[f]{Clears related cache \& recursive calls} \label{line:pbmc-recur-helper:backjump}
        }
        \lElse{
            $\mathsf{ExitReturnZero}()$     \Comment*[f]{Unresolvable conflict, return 0}
        } \label{line:pbmc-recur-helper:propagate:conflict-handling-end}
    } {
        $\mathcal{C} \gets \mathsf{SplitComponent}(\varphi, \sigma)$ \; \label{line:pbmc-recur-helper:comp-decomp}
        \lIf{\upshape $|\mathcal{C}| > 1$}{$\zeta[\varphi] \gets \prod_{c \in \mathcal{C}} \mathsf{Count}(c, \sigma, \zeta)$} \label{line:pbmc-recur-helper:comp-count-prod}
        \ElseIf{\upshape $\varphi$ has unassigned variable}{ \label{line:pbmc-recur-helper:var-dec-start}
            $l \gets \mathsf{pickNextLit}(\varphi, \sigma)$\; \label{line:pbmc-recur-helper:pick-lit}

            $\zeta[\varphi] \gets \mathsf{Count}(\varphi \wedge l, \sigma \cup \{ l\}, \zeta) + \mathsf{Count}(\varphi \wedge \bar{l}, \sigma \cup \{ \bar{l} \} , \zeta)$ \; \label{line:pbmc-recur-helper:var-dec-end}
        }
        \lElse{
            $\zeta[\varphi] \gets 1$ \label{line:pbmc-recur-helper:sat-count}
        }
    }
    \Return $\zeta[\varphi]$
\end{algorithm}

\subsection{{\heuname} Variable Decision Heuristic}
\label{subsec:var-dec-heu}

In our top-down PB model counter {\pbmc}, we introduce a variable decision ordering heuristic that we term \textit{variable coefficient impact score} ({\heuname}). The intuition for a new variable decision heuristic for PB formulas came from the fact that the coefficient of each literal affects its impact on the overall counting process. Existing variable decision heuristics from CNF model counting literature do not have to consider the impact of coefficients, and therefore modifications are required to adapt to PB model counting. Additionally, tree decomposition may also be less effective when dealing with PB constraints that are relatively longer or constraints that share a lot of variables as in the case of multi-dimensional knapsack problems.

In our {\heuname} heuristics, we add a coefficient impact score for each variable as an equal-weighted additional component to prior variable decision heuristics adopted from {\gpmc}. The coefficient impact score for a variable $x$ is given by $\left( \sum_{j \in \Omega_x} b_x^j / k_j \right) \div |\Omega_x|$ where $\Omega_x$ is the set of constraints that variable $x$ appears in, $b_x^j$ is the coefficient of $x$ in constraint $j$, and $k_j$ is the degree of constraint $j$. At the start of the counting process, the score for each variable is computed, and variables are decided in descending order of scores in the counting process. In addition, the phase preferences of variable decisions are set such that the term coefficient with the largest coefficient impact score is applied, this is reflected in line~\ref{line:pbmc-recur-helper:pick-lit} of Algorithm~\ref{alg:pbmc-recur-helper}. The intuition is that branching on a variable with a larger score would have a greater impact on satisfiability gaps than that on a variable with a smaller score.

\subsection{Caching Scheme and Optimization} \label{subsec:cacheing-scheme}
After computing the count of each component, we store it in the cache to avoid redundant computations when the same component is encountered in a different branch of the counting process. A component in our caching scheme contains the following information for unique identification --- 
\begin{inparaenum}[1)]
\item the set of unassigned variables in the component,
\item the set of yet-to-be-satisfied PB constraints in the component, and
\item their current gaps. 
\end{inparaenum}
In our implementation, we stored variables and constraints of a component by their IDs. 

\begin{lemma}
    Given a PB formula $F$ and partial assignments $\sigma$ and $\sigma'$. Let component $c$ be that of $F|_{\sigma}$ and $c'$ be that of $F|_{\sigma'}$. If $\mathit{VarIDs}(c)= \mathit{VarIDs}(c')$, $\mathit{CstrIDs}(c)= \mathit{CstrIDs}(c')$, and each constraint $\omega$ in $c$ satisfies $\mathit{gap(\omega, \sigma)} = \mathit{gap(\omega, \sigma')}$, then $c = c'$.
\end{lemma}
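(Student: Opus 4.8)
The plan is to show that two components sharing the same variable IDs, the same constraint IDs, and identical gaps on every shared constraint must in fact be the \emph{same} component, where a component is understood as the residual PB formula obtained by restriction. First I would make precise what a component is: given a PB formula $F$ and a partial assignment $\sigma$, the restricted formula $F|_\sigma$ consists, for each original constraint $\sum_i a_i \ell_i \geq k$, of the residual constraint on the still-unassigned literals, whose degree has been lowered by the total coefficient mass of the literals already set to true. The key observation is that this residual constraint is determined precisely by (i) which literals remain (captured by the variable IDs, since coefficients $a_i$ are fixed data of $F$ and indexed by variable), and (ii) the residual degree, which is exactly the gap. So the strategy is to argue constraint-by-constraint that equal variable sets and equal gaps force the residual constraints to coincide, and then assemble these into equality of the whole components.

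The key steps, in order, are as follows. First, fix an arbitrary constraint ID $j$ that appears in both $c$ and $c'$; by hypothesis $\mathit{CstrIDs}(c)=\mathit{CstrIDs}(c')$ these are the same set, so $j$ indexes the same original constraint $\omega_j = \sum_i a_i^j \ell_i \geq k_j$ of $F$ in both. Second, I would observe that the residual of $\omega_j$ under a restriction keeps exactly the terms $a_i^j \ell_i$ whose variable is unassigned. Since $\mathit{VarIDs}(c)=\mathit{VarIDs}(c')$ and the unassigned variables of $\omega_j$ in each component are precisely $\mathit{VarIDs}(c)$ intersected with the variables of $\omega_j$, the two residuals have the \emph{same} set of terms with the \emph{same} coefficients (the $a_i^j$ come from $F$ and do not depend on $\sigma$). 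Third, the only remaining degree of freedom is the residual degree; by Definition~1 the residual degree of $\omega_j$ under $\sigma$ equals $\mathit{gap}(\omega_j,\sigma)$, and likewise for $\sigma'$. The hypothesis $\mathit{gap}(\omega_j,\sigma)=\mathit{gap}(\omega_j,\sigma')$ therefore forces equal residual degrees, so the residual constraints are literally identical. Finally, since $j$ was arbitrary and the two components have identical variable sets and identical constraints, they are equal as residual formulas, giving $c=c'$.

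The main obstacle, to the extent there is one, is conceptual rather than computational: it lies in pinning down the right notion of ``component equality'' and confirming that a component really is characterized \emph{only} by the data listed (variable IDs, constraint IDs, gaps) together with the fixed coefficient data of $F$. In particular one must be careful that restriction does not introduce phase information or re-sign coefficients in a way that could differ between $\sigma$ and $\sigma'$ while still agreeing on the recorded gaps. Here the normalization convention stated earlier in the paper helps: all constraints are in ``$\geq$'' form with positive coefficients, with negations pushed into the literals via $\bar{x}=1-x$. Because the sign normalization is a property of the original constraint $\omega_j$ in $F$ and not of the particular restriction, both residuals inherit the same normalized coefficients, so no discrepancy can arise. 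I would therefore devote the bulk of the argument to establishing cleanly that the residual degree coincides with the gap (a direct unwinding of Definition~1) and that the surviving terms are a function of the variable IDs alone, after which the conclusion is immediate.
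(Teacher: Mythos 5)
Your argument is correct and follows essentially the same route as the paper's proof: for each shared constraint ID, the equal variable sets force the residual left-hand sides to coincide (since the coefficients are fixed data of $F$), and the equal gaps force the residual degrees to coincide, so the residual constraints, and hence the components, are identical. The paper's version is just a terser rendering of exactly this constraint-by-constraint argument.
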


\begin{proof}
    It is clear that $\mathit{Vars}(c)= \mathit{Vars}(c')$.
    For each constraint $\omega$ with $\mathit{CstrID}(\omega) \in \mathit{CstrIDs}(c)$, $\omega_{\sigma}$ and $\omega_{\sigma'}$ appear in $c$ and $c'$, respectively.
    As $\mathit{Vars}(c) = \mathit{Vars}(c')$, the left-hand side of $\omega_{\sigma}$ is the same as that of $\omega_{\sigma'}$. Then we know $\omega_{\sigma} = \omega_{\sigma'}$ as $\mathit{gap(\omega, \sigma)} = \mathit{gap(\omega, \sigma')}$.
\end{proof}

In our implementation, we cache the variable and constraint IDs in ascending order, thereby only needing to record the first ID and the differences between each $i^{th}$ and $i+1^{th}$. Since a difference is often smaller than an ID, we can use fewer bits to record each component.
Since each gap $s$ is positive, we can record it as $s - 1$. Additionally, we do not record gap for a clausal constraint, that is a constraint with coefficients and degree 1. Such a scheme is particularly efficient for the constraints with a lot of literals.

As a cache entry optimization trick to increase the cache hit rate, we perform saturation on the gaps during the generation of component cache IDs. The intuition for cache gap saturation comes from existing literature on PB solving where saturation is performed on the coefficients of a PB constraint~\cite{EN18} i.e. a term $a x$ in the PB constraint is modified to $k x$ where $k$ is the degree when $a > k$. Given a PB constraint $\sum_{j=1}^{m} a_j x_j \geq s$ where $s$ is the gap, our caching scheme changes $s$ in the cache ID to $\min{a_j}$ over all coefficients $a_j$ of unassigned variables when $0 < s < \min{a_j}$. This is sound as assigning any of the remaining literals to true would lead to the constraint being satisfied which is the same outcome without modification of $s$ i.e. the resulting gap being 0 or negative.

\section{Experiments}
\label{sec:experiments}

\subsection{Setup}

We implemented our top-down PB model counter, {\pbmc}, in C++, building on methodologies introduced in the {\gpmc} CNF model counter and {\roundingsat} PB solver. We performed comparisons to the existing state-of-the-art PB model counters {\pbcount}~\cite{YM24} and {\pbcounter}~\cite{LXY24}, and also model counting competition (MCC2024) winner {\ganak}~\cite{SRSM19}. For consistency, we employed the same benchmark suite as prior work~\cite{YM24}. The suite comprises of 3500 instances across three benchmark sets for PB model counting -- \textit{Auction} (1000 instances), \textit{M-dim Knapsack} (1000 instances), and \textit{Sensor placement} (1500 instances). We ran experiments on a cluster with AMD EPYC 7713 CPUs with 1 core and 16 GB memory for each instance and a timeout of 3600 seconds.

Through our evaluations, we seek to understand the performance of our top-down exact PB model counter {\pbmc}. In particular, we investigate the following research questions:
\begin{description}
    \item[RQ 1:] What is the performance of {\pbmc} compared to baselines?
    \item[RQ 2:] What is the performance impact of our {\heuname} heuristic in Section~\ref{subsec:var-dec-heu}?
    \item[RQ 3:] What is the performance impact of our cache entry optimization in Section~\ref{subsec:cacheing-scheme}?
\end{description}

\subsection{RQ 1: Performance of Top-down PB Counter {\pbmc}}

We compared {\pbmc} against the state-of-the-art PB model counter {\pbcount}, which has a bottom-up design, to understand how well a top-down design would perform. The results are shown in Table~\ref{tab:main-counting-results} and the cactus plot of runtime is shown in Figure~\ref{fig:pbmc-pbcount-runtime}.

\begin{figure*}[htb]
    \centering
    \begin{subfigure}{0.32\textwidth}
        \includegraphics[width=\linewidth]{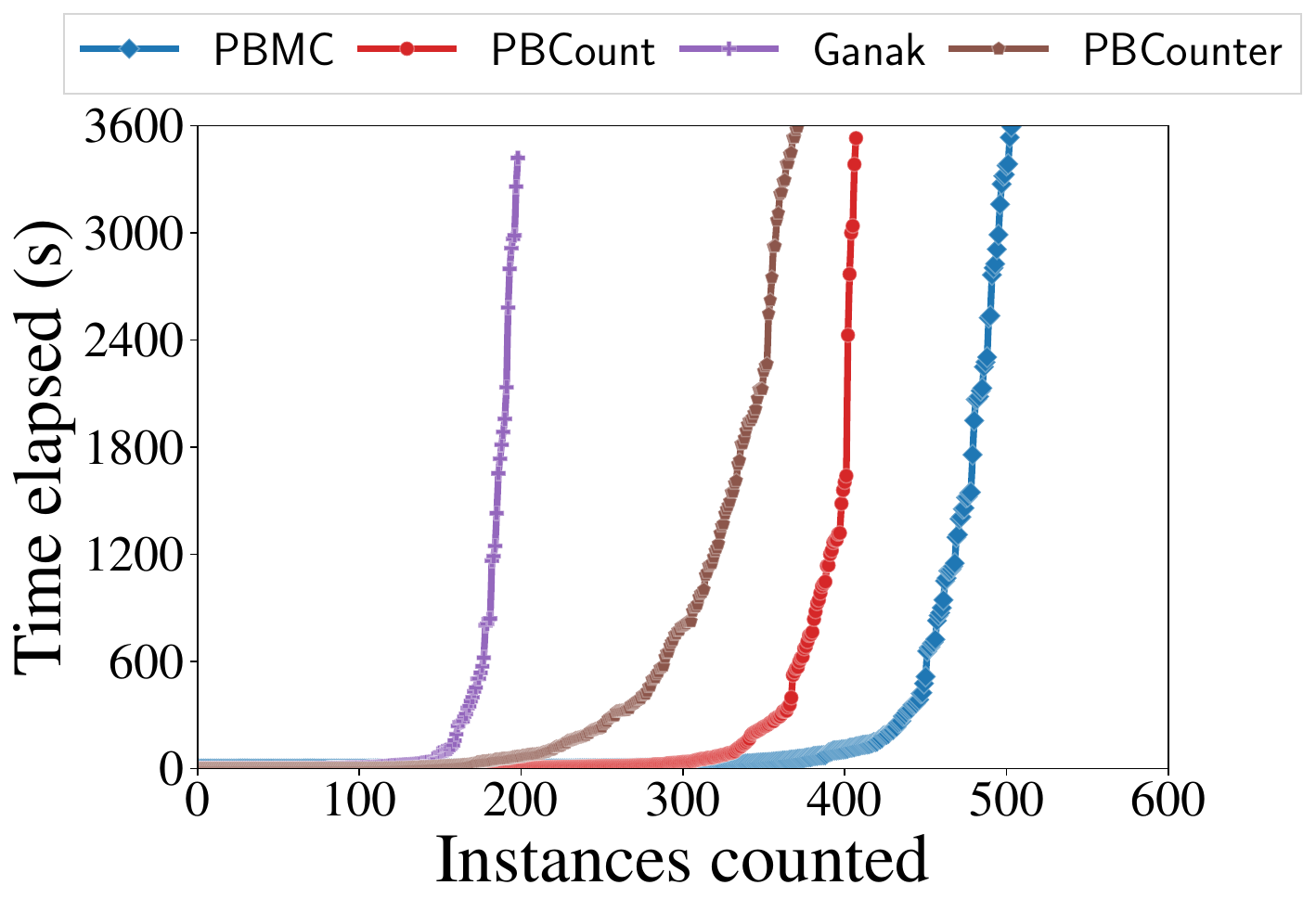}
        \caption{Auction}
    \end{subfigure}
    \hfill
    \begin{subfigure}{0.32\textwidth}
        \includegraphics[width=\linewidth]{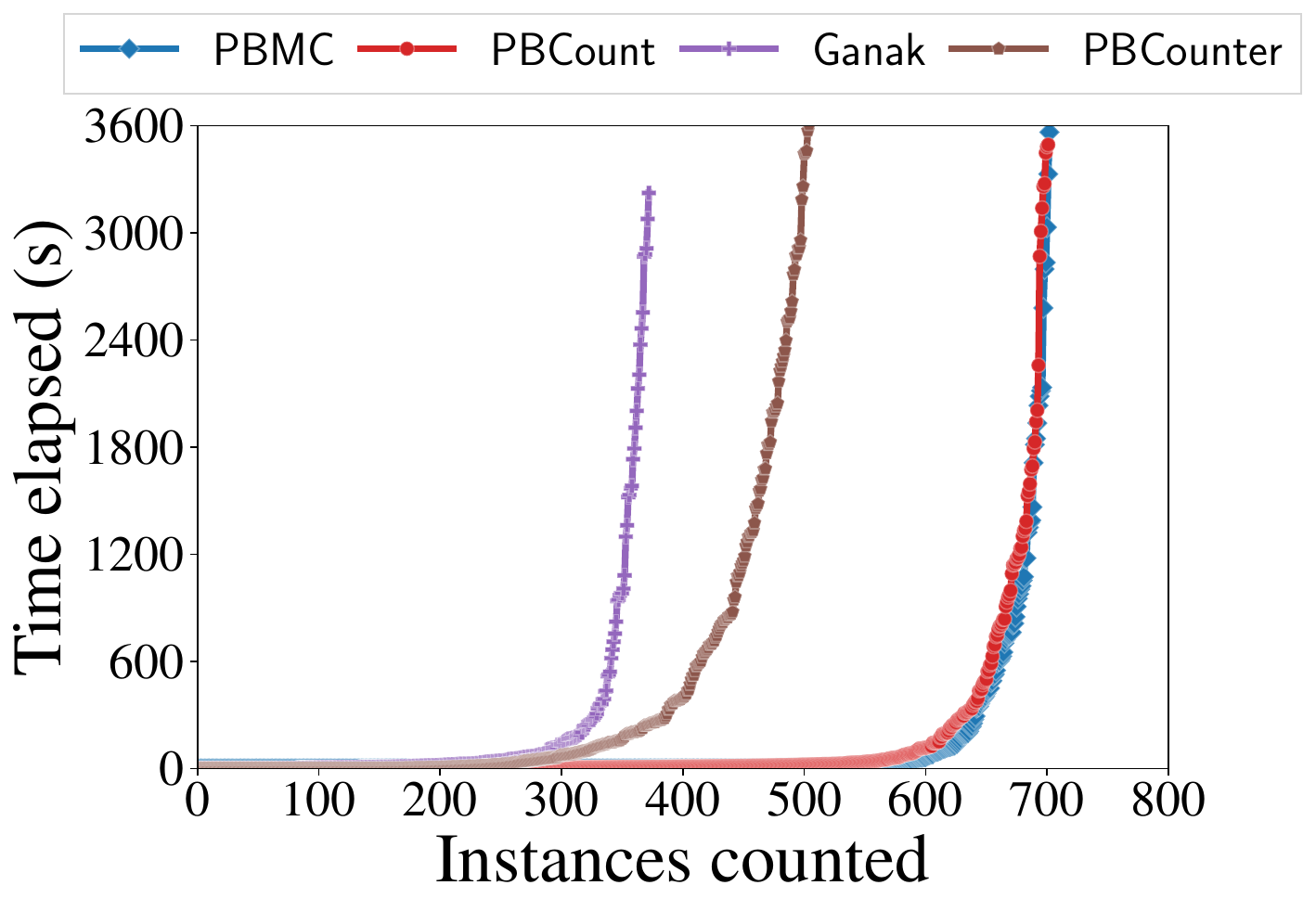}
        \caption{$\mathcal{M}$-dim Knapsack}
    \end{subfigure}
    \hfill
    \begin{subfigure}{0.32\textwidth}
        \includegraphics[width=\linewidth]{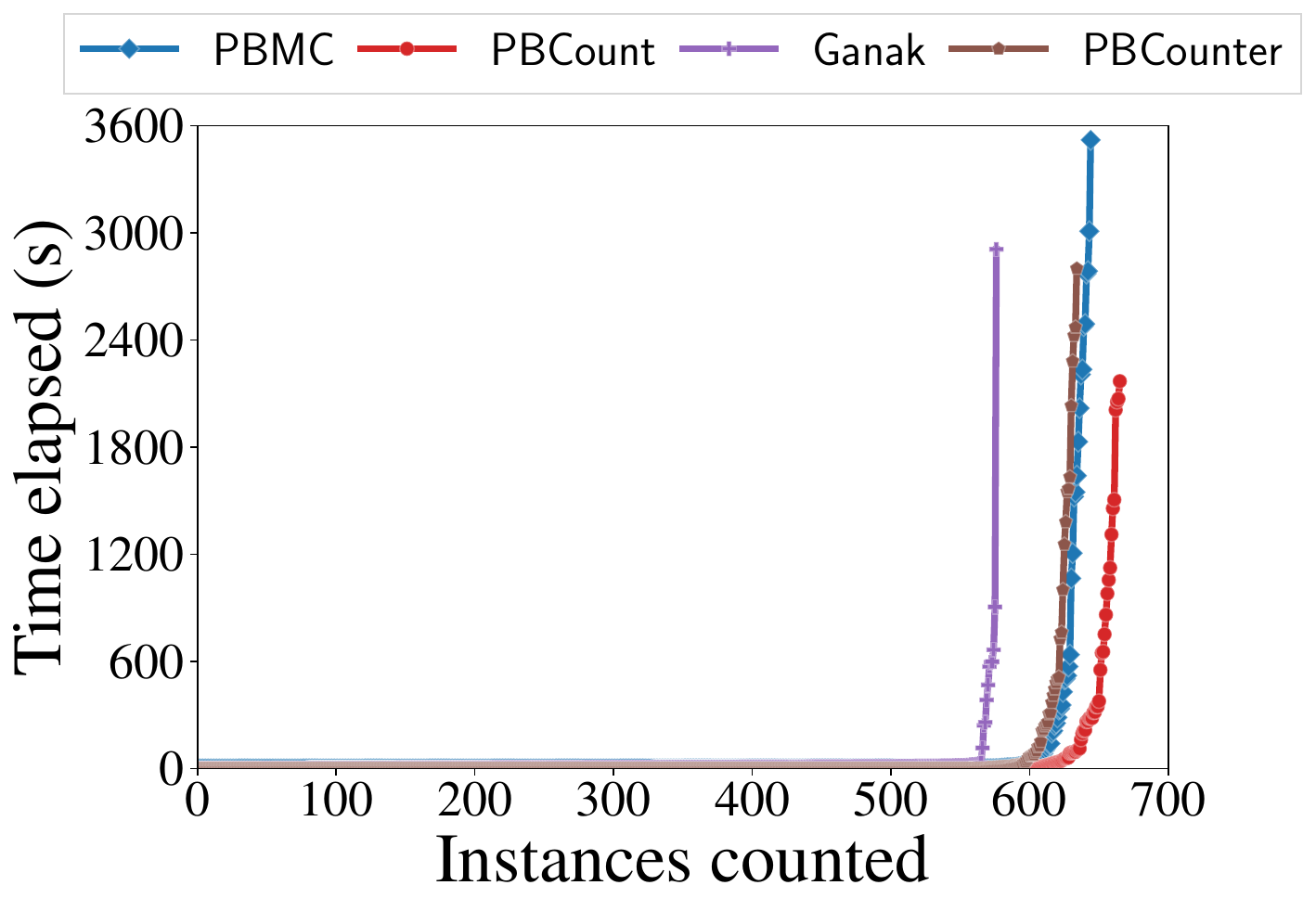}
        \caption{Sensor placement}
    \end{subfigure}
    \caption{Runtime cactus plots of {\ganak}, {\pbcount}, {\pbcounter}, and {\pbmc} with 3600s timeout. A point on the plot indicates the respective counter could return counts for $x$ instances in $y$ time.}
    \label{fig:pbmc-pbcount-runtime}
\end{figure*}

\begin{table}[h!]
    \centering
    \begin{small}
    \begin{NiceTabular}{l|r|r|r|r}
    \toprule
    Counters                        & Auction       & $\mathcal{M}$-dim knapsack    & Sensor placement  & Total         \\
    \midrule
    {\ganak}                        & 198           & 372                           & 576        & 1164        \\
    {\pbcount}                      & 407           & 701                           & \textbf{665}      & 1773          \\
    {\pbcounter}                    & 371           & 503                           & 634        & 1508         \\
    {\pbmc}                         & \textbf{503}  & \textbf{702}                  & 644               & \textbf{1849} \\
    \bottomrule
    \end{NiceTabular}
    \end{small}
    \caption{Number of benchmark instances counted by {\ganak}, {\pbcount}, {\pbcounter}, and {\pbmc} in 3600s, higher is better.}
    \label{tab:main-counting-results}
\end{table}

Overall, {\pbmc} is able to return counts for 1849 instances, 76 instances more than that of {\pbcount}, 341 instances more than {\pbcounter}, and 685 instances more than {\ganak}. More specifically, {\pbmc} leads in \textit{auction} and \textit{M dim-knapsack} benchmark sets and came in second in the \textit{sensor placement} benchmark set. In addition, there are 104 instances that could only be counted by {\pbmc}.

It is worth noting that the \textit{sensor placement} benchmark instances all have coefficients of 1 or -1, and that for each instance the degree of all but one PB constraint is 1.
We also compared {\pbmc} and {\pbcount} on a modified version of \textit{sensor placement}, whereby coefficients are not all 1 or -1 by accounting for potential redundancy requirements and varying costs associated with different sensor placement locations. On the cost-aware sensor placement instances {\pbmc} outperforms {\pbcount} by returning counts for 715 instances whereas {\pbcount} could only return for 678 instances.

\subsection{RQ 2: Performance Impact of {\heuname} Heuristics}

We conducted further experiments to understand the impact of our {\heuname} heuristic as introduced in Section~\ref{subsec:var-dec-heu}. In particular, we compared our implementation of {\pbmc} with {\heuname} against a baseline version of {\pbmc} that uses the existing variable decision heuristics from the {\gpmc} model counter. The results are shown in Table~\ref{tab:vheu-ablation-results} and the runtime cactus plot is shown in Figure~\ref{fig:pbmc-heu-runtime}.

\begin{figure*}[htb]
    \centering
    \begin{subfigure}{0.32\textwidth}
        \includegraphics[width=\linewidth]{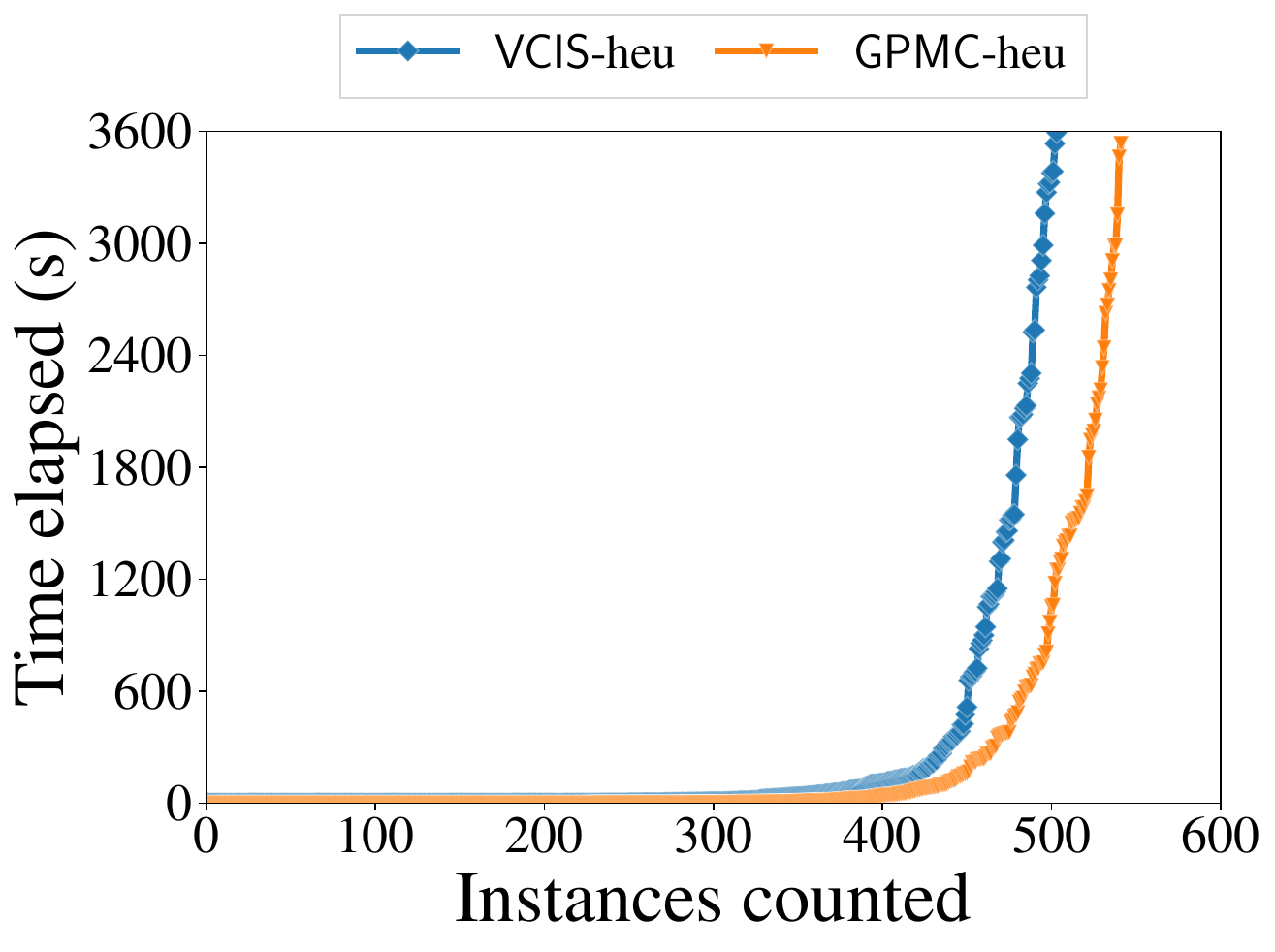}
        \caption{Auction}
    \end{subfigure}
    \hfill
    \begin{subfigure}{0.32\textwidth}
        \includegraphics[width=\linewidth]{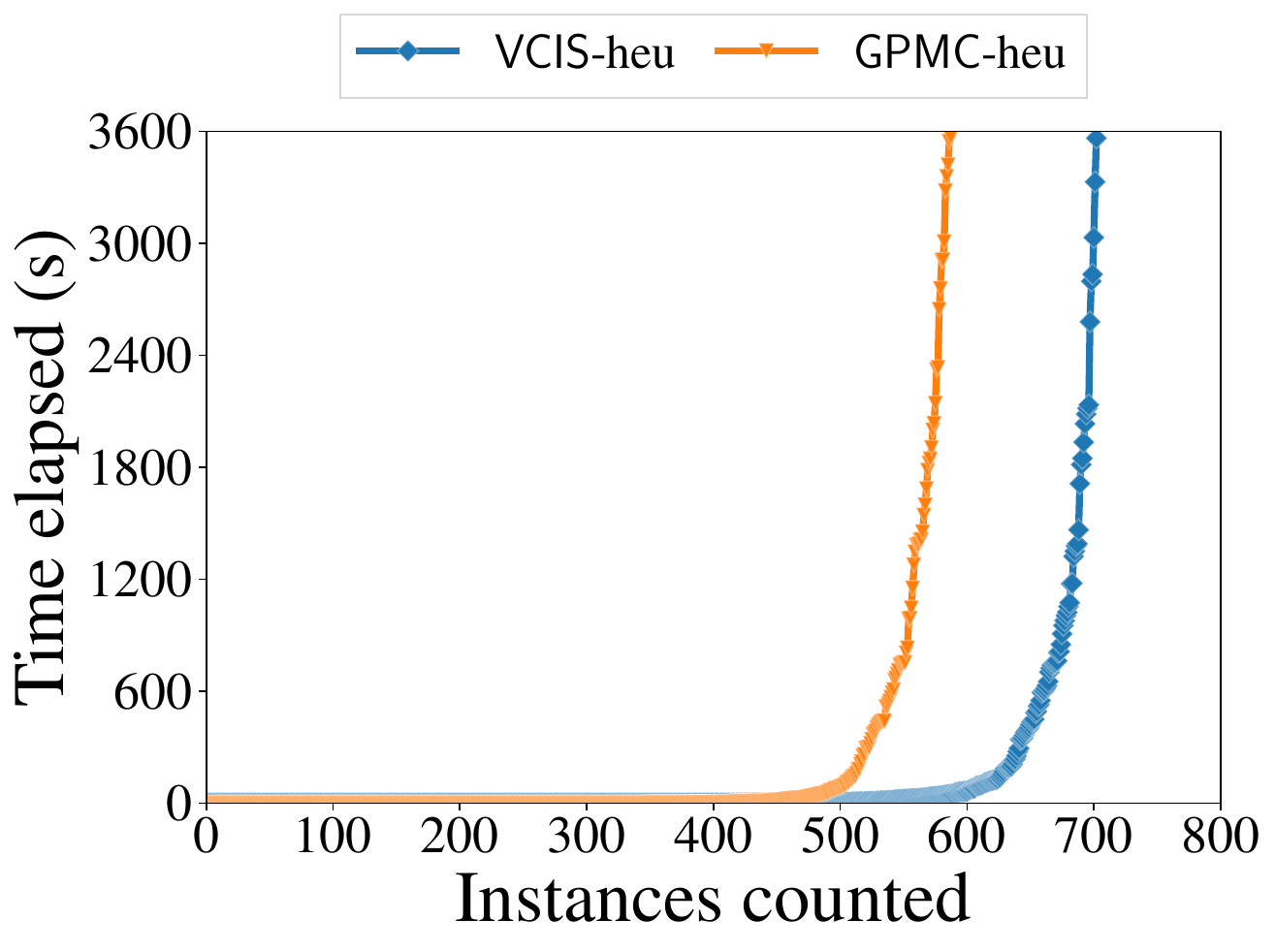}
        \caption{$\mathcal{M}$-dim Knapsack}
    \end{subfigure}
    \hfill
    \begin{subfigure}{0.32\textwidth}
        \includegraphics[width=\linewidth]{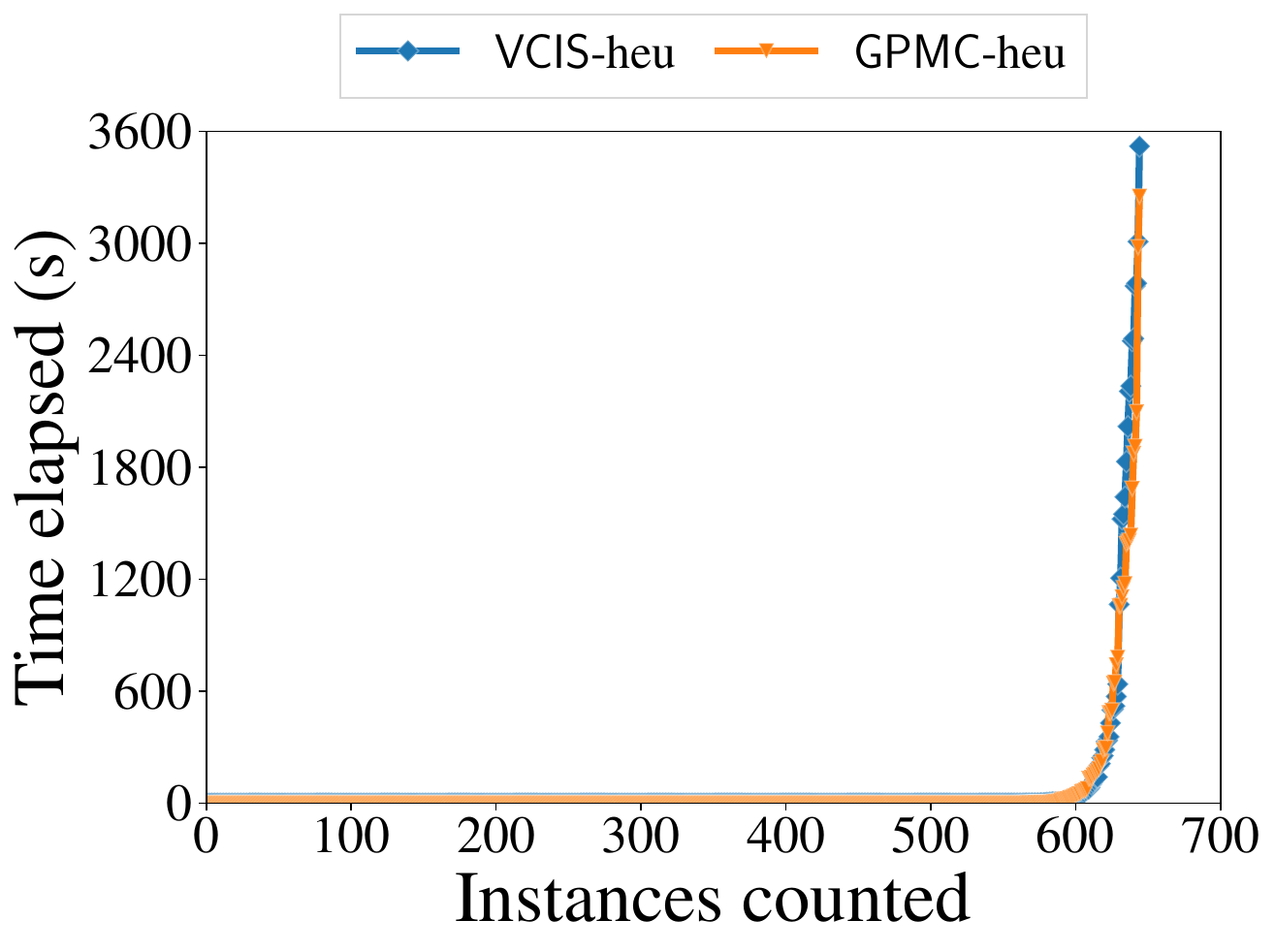}
        \caption{Sensor placement}
    \end{subfigure}
    \caption{Runtime cactus plots of {\pbmc} with {\heuname} heuristic and {\gpmc} variable decision heuristics ($\mathsf{PBMC}$-$\mathsf{GHeu}$) with 3600s timeout.}
    \label{fig:pbmc-heu-runtime}
\end{figure*}

\begin{table}[h!]
    \centering
    \begin{small}
    \begin{NiceTabular}{l|r|r|r|r}
    \toprule
    Counters                        & Auction       & $\mathcal{M}$-dim knapsack    & Sensor placement  & Total         \\
    \midrule
    {\gpmc}-heu                     & \textbf{541}  & 587                           & 644        & 1772          \\
    {\heuname}-heu                  & 503           & \textbf{702}                  & 644               & \textbf{1849} \\
    \bottomrule
    \end{NiceTabular}
    \end{small}
    \caption{Number of PB benchmark instances counted by {\pbmc}, using {\heuname} and {\gpmc} variable decision heuristics in 3600s.}
    \label{tab:vheu-ablation-results}
\end{table}

In the evaluations, {\pbmc} with our {\heuname} variable decision heuristic ({\heuname}-heu) is able to return counts for 1849 benchmark instances in total, substantially more than the 1772 benchmark instances when {\pbmc} is coupled with the existing heuristics from {\gpmc} ({\gpmc}-heu). In particular, {\heuname}-heu returned counts for 115 more instances than {\gpmc}-heu in \textit{$\mathcal{M}$-dim knapsack} benchmarks, while losing out 38 instances in \textit{Auction} benchmarks. Both heuristics performed the same in \textit{sensor placement} benchmarks. This set of evaluations highlights the tremendous impact of variable decision ordering in the context of PB model counting and demonstrates the performance benefits of our {\heuname} variable decision heuristic.

\subsection{RQ 3: Performance Impact of Cache Entry Optimization}
We conducted additional experiments to understand the impact of our optimization for our caching scheme mentioned in Section~\ref{subsec:cacheing-scheme}, that is modifying the gap value to the smallest coefficient of remaining unassigned variables. We compared {\pbmc} to a version with cache entry optimization disabled (denoted as {\pbmcncs}). We show the number of benchmark instances each configuration could count in Table~\ref{tab:cache-saturation-ablation-results}. The result shows that our cache entry optimization technique has a positive impact on the performance of {\pbmc}, although less significant than {\heuname} heuristics in the previous set of experiments.

\begin{table}[h!]
    \centering
    \begin{small}
    \begin{NiceTabular}{l|r|r|r|r}
    \toprule
    Counters                        & Auction       & $\mathcal{M}$-dim knapsack    & Sensor placement  & Total         \\
    \midrule
    {\pbmcncs}                      & 500           & 700                           & 644        & 1844          \\
    {\pbmc}                         & \textbf{503}  & \textbf{702}                  & 644        & \textbf{1849} \\
    \bottomrule
    \end{NiceTabular}
    \end{small}
    \caption{Number of PB benchmark instances counted by {\pbmc} with ({\pbmc}) and without ({\pbmcncs}) cache entry optimization in 3600s.}
    \label{tab:cache-saturation-ablation-results}
\end{table}

\subsection{Summary}

In \textbf{RQ 1}, {\pbmc} demonstrated a significant lead over state-of-the-art competitors. In the process, we observed that {\pbmc} performs better on PB instances with coefficients that are not just 1, as evident in the sensor placement benchmark set. In \textbf{RQ 2}, we showed the need for heuristics to consider PB-specific features such as coefficients through the significant performance improvement that {\heuname} enabled. We also observed that the top-down approach is rather sensitive to variable decision heuristics. Finally, in \textbf{RQ 3} we showed that by simply optimizing cache entries, one can get noticeable performance improvements.

\section{Conclusion and Future Work}
\label{sec:conclusion}

In this work, we explored the top-down design for model counters in the context of pseudo-Boolean model counting. In particular, we introduced our prototype PB model counter {\pbmc} which made use of methodologies from PB solvers and CNF counters as well as our {\heuname} heuristics to demonstrate leading performance over the existing state-of-the-art PB counter {\pbcount}. {\pbmc} highlights the promising performance of the top-down counter design in the context of PB model counting. It is worth noting that the current performance lead of top-down PB counter {\pbmc} over bottom-up PB counter (\pbcount) is not as significant as witnessed in the case of top-down vs bottom-up CNF counters. This could be perhaps partly due to the fact that top-down counters are rather sensitive to variable decision ordering, especially in the context of PB formulas where coefficients and degrees have an impact on overall performance, as shown in our evaluations. Therefore, it would be of interest to further study variable decision ordering heuristics in the context of PB formulas as future work, because existing heuristics from CNF model counting literature might not be ideal as they do not account for PB-specific features. In addition, it would also make sense to develop techniques to select the appropriate variable decision heuristic based on the given input. Another topic of interest would be preprocessing techniques for the top-down PB counter design, similar to CNF model counting where the top-down counters incorporated various preprocessing techniques which led to significant performance improvements over the years. We hope this study increases the visibility of PB model counting, inspires better counter designs, and more applications of PB model counting.

\bibliography{reference.bib}

\end{document}